\newcommand*{\defeq}{\mathrel{%
  \rlap{\raisebox{0.3ex}{$\m@th\cdot$}}%
  \raisebox{-0.3ex}{$\m@th\cdot$}}%
  =}\makeatother
\newcommand*{\eqdef}{=\mathrel{%
  \raisebox{0.3ex}{$\m@th\cdot$}%
  \llap{\raisebox{-0.3ex}{$\m@th\cdot$}}}%
  }\makeatother
\newcommand*{\defeqv}{\mathrel{%
  \rlap{\raisebox{0.3ex}{$\m@th\cdot$}}%
  \raisebox{-0.3ex}{$\m@th\cdot$}}%
  \Longleftrightarrow}\makeatother
\newcommand*{\eqvdef}{\Longleftrightarrow\mathrel{%
  \raisebox{0.3ex}{$\m@th\cdot$}%
  \llap{\raisebox{-0.3ex}{$\m@th\cdot$}}}%
  }\makeatother
\newcommand{\win}{\classFont{Win}}
\newcommand\classFont[1]{\textnormal{#1}}
\newcommand{\NP}{\protect\ensuremath{\classFont{NP}}\xspace}
\newcommand{\cP}{\protect\ensuremath{{\classFont{\#P}}}\xspace}
\newcommand{\PP}{\protect\ensuremath{\classFont{PP}}\xspace}
\newcommand{\AC}{\protect\ensuremath{{\classFont{AC}^0}}\xspace}
\newcommand{\NC}{\protect\ensuremath{{\classFont{NC}^1}}\xspace}
\newcommand{\SAC}{\protect\ensuremath{{\classFont{SAC}^1}}\xspace}
\newcommand{\ACone}{\protect\ensuremath{{\classFont{AC}^1}}\xspace}
\newcommand{\cAC}{\protect\ensuremath{{\#\classFont{AC}^0}}\xspace}
\newcommand{\FO}{\protect\ensuremath{\classFont{FO}}\xspace}
\newcommand{\WinFO}{\protect\ensuremath{\classFont{\#Win-FO}}\xspace}
\newcommand{\struc}{\protect\ensuremath{\textnormal{STRUC}}}
\newcommand{\tString}{\protect\ensuremath{τ_{\textnormal{string}}}\xspace}
\newcommand{\tCirc}{\protect\ensuremath{τ_{\textnormal{circ}}}}
\newcommand{\tu}[1]{\overline{#1}}
\newcommand{\gpr}{\textnormal{GPR}\xspace}
\newcommand{\semi}{\textnormal{semi}}
\newcommand{\bounded}{\textnormal{bound}}
\newcommand{\ie}{i.e.\@\xspace}
\newcommand{\eg}{e.g.\@\xspace}
\newcommand{\ST}{such that\@\xspace}
\newcommand{\fa}{for all\@\xspace}
\newcommand{\stfa}{such that for all\@\xspace}
\title{Model-Theoretic Characterizations of Boolean and Arithmetic Circuit Classes of Small Depth}
\author[1]{Arnaud Durand}
\author[2]{Anselm Haak}
\author[2]{Heribert Vollmer}
\affil[1]{Université Paris Diderot, IMJ-PRG, CNRS UMR 7586, Case 7012, 75205 Paris cedex 13, France\\
  \texttt{durand@math.univ-paris-diderot.fr}}
\affil[2]{Theoretische Informatik, Leibniz Universität Hannover, Appelstraße, D-30167, Germany\\
\texttt{(haak|vollmer)@thi.uni-hannover.de}}
\authorrunning{A. Durand, A. Haak, H. Vollmer}
\titlerunning{A Model-Theoretic Characterization of Constant-Depth Arithmetic Circuits}
\subjclass{F.1.1 Models of Computation, F.1.3 Complexity Measures and Classes, F.4.1 Mathematical Logic}
\keywords{finite model theory, descriptive complexity, arithmetic circuits, counting classes}
\begin{document}
\maketitle


\begin{abstract}
In this paper we give a characterization of both Boolean and arithmetic circuit classes of logarithmic depth in the vein of descriptive complexity theory, i.e., the Boolean classes $\NC$, $\SAC$ and $\ACone$ as well as their arithmetic counterparts $\#\NC$, $\#\SAC$ and $\#\ACone$. We build on Immerman's characterization of constant-depth polynomial-size circuits by formulae of first-order logic, i.e., $\AC=\FO$, and augment the logical language with an operator for defining relations in an inductive way. Considering slight variations of the new operator, we obtain uniform characterizations of the three just mentioned Boolean classes. The arithmetic classes can then be characterized by functions counting winning strategies in semantic games for formulae characterizing languages in the corresponding Boolean class.
\end{abstract}

\section{Introduction}

The computational power of arithmetic circuits is of current focal interest in computational complexity theory, see the recent surveys \cite{Mah2014,KaSa2014} or the continuously updated collection of results at \cite{Sap2017}. A number of very powerful techniques to prove lower bounds for such circuits have been developed, however only for quite restricted classes. 

A long line of research in computational complexity is to characterize complexity classes in a model-theoretic way. Instead of constructing a computational device such as a Turing machine or a family of circuits \emph{deciding} a language $L$, a formula is built that \emph{defines} the property of those words in $L$. Best-known is probably Fagin's Theorem stating that languages in $\NP$ are exactly those that can be defined in existential second-order logic. More important for this paper is Immerman's theorem, in which the circuit class $\AC$ of all languages decidable by Boolean circuits of polynomial size and constant depth is addressed: Immerman showed that $\AC$ equals the class of languages definable by first-order formulae: $\AC=\FO$ \cite{Immerman89}. The rationale behind this area of \emph{descriptive complexity}, as it is called, is to characterize complexity classes in a model-theoretic way in order to better understand their structure, and to use logical methods in order to get new insights about the considered classes and, most prominently, to obtain lower bounds, see the monographs \cite{ImmermanBuch,Libkin04}. The famous lower bound for $\AC$, showing that the parity function cannot be computed by such circuit families \cite{FurstSS84}, was obtained independently by Ajtai \cite{Ajtai83} in a purely logical way.

For arithmetic circuit classes, only one descriptive complexity characterization is known to date. Generalizing in a sense Immerman's Theorem, it was shown very recently that the class $\cAC$ of those functions from binary words to natural numbers computable by polynomial-size constant-depth arithmetic circuits with plus and times gates is equal to the class of those functions computing winning strategies in semantic games for first-order formulae: $\cAC = \WinFO$ \cite{HaakV16}. A different way to view this result is to say that $\cAC$ is the class of functions counting Skolem functions for $\FO$-formulae.

Central for this result is a way of looking at arithmetic computation as a counting process: Say that a \emph{proof tree} of a Boolean circuit $C$ for a given input word $w$ is a minimal subtree (of the circuit unfold into a tree) witnessing that the circuit outputs $1$ on input $w$, and let $\#C(w)$ denote the number of such proof trees. It is folklore that $\cAC$ consists of those functions counting proof trees for $\AC$-circuits. To prove the mentioned result from \cite{HaakV16}, a formula has to be constructed whose number of winning strategies (or, number of Skolem functions) equals the number of proof trees of the original circuit. 

The aim of this paper is to generalize the theorem $\cAC = \WinFO$ to larger circuit classes, in particular the classes $\#\NC$, $\#\SAC$ and $\#\ACone$, defined by families of arithmetic circuits of polynomial size and logarithmic depth with bounded fan-in addition and multiplication gates (for $\#\NC$), unbounded fan-in addition and bounded fan-in multiplication gates ($\#\SAC$), and unbounded fan-in addition and multiplication gates ($\#\ACone$), see \cite{HeribertBuch}.
The mentioned equality between the value computed by an arithmetic circuit and the number of proof trees of the corresponding Boolean circuit does not only hold in the case of the class $\AC$ but is a general observation. Thus, a reasonable roadmap to obtain our generalization seems to study logical characterizations of the corresponding decision classes $\NC$, $\SAC$ and $\ACone$. Such characterizations can be found in the literature: $\NC$ can be characterized by an extension of first-order logic by so called monoidal quantifiers \cite{baimst90}, and similarly $\SAC$ by extending $\FO$ by groupoidal quantifiers \cite{LautemannMSV01}. 
However, for such logics with generalized quantifier the notion of winning strategy is not clear. 
Following a completely different approach, Immerman extended first-order logic by allowing repeated quantifier blocks and thus characterized $\ACone$ \cite{Immerman89}. Here it can be said that in Immerman's notation, $\#\ACone=\WinFO[\log]$, but this result cannot be transfered to the other log-depth classes $\NC$ and $\SAC$.
Hence we have to start by developing new logical characterizations for the Boolean classes $\NC$, $\SAC$ and $\ACone$. 

Inspiration comes from a result by Compton and Laflamme, characterizing $\NC$ by $\FO$ logic augmented with the RPR-operator allowing to define relations by a certain kind of linear recursion \cite{ComptonL90} (RPR stands for relational primitive recursion). This approach does not generalize to the classes $\SAC$ and $\ACone$, though. Also, the number of winning strategies does not seem to be related to the number of proof trees; so again, their approach is not suitable for our aim. Instead, we consider a new operator, called GPR (``guarded predicative recursion''), allowing to define relations by a certain kind of parallel recursion. We show that $\FO(\gpr)$, first-order logic augmented by GPR, characterizes $\ACone$, and that slight modifications of the GPR-operator lead to characterizations of $\NC$ and $\SAC$. In a second step, we show that these characterizations are in a sense ``close enough'' to the circuit world to mirror the process of counting proof trees by counting winning strategies in semantic games.  

Our paper is structured as follows. In the next section, we will give the necessary preliminaries from first order logic and circuit complexity including the respective counting mechanism. In Sect.~\ref{sect:GPR} we briefly recall the result by Compton and Laflamme and then introduce our inductive operator \gpr. To demonstrate suitability of our logical approach, we give an example of a formula defining an $\ACone$-complete problem. We then prove our main results: In Sect.~\ref{sect:dec} we characterize the Boolean classes $\NC$, $\SAC$ and $\ACone$ in a model-theoretic way by first-order logic with different forms of the GPR-operator. This is the technically most demanding part of our paper. We would like to stress that our proofs are completely different from the one for the mentioned result from Compton and Laflamme \cite{ComptonL90}. In Sect.~\ref{sect:count} we characterize the arithmetic classes $\#\NC$, $\#\SAC$ and $\#\ACone$ by counting winning-strategies in semantic games for the above logics. Finally, we conclude with a summary and some open problems.

\section{Preliminaries}

In this paper we will use first-order logic \FO with usual syntax and semantics, see, \eg, \cite{EbbFlumThomBuch}.
We consider finite σ-structures where σ is a finite vocabulary consisting of relation and constant symbols. For a structure $\mathcal{A}$, $\text{dom}(\mathcal{A})$ denotes its universe. We will always use structures with universe $\{0, 1, \dots, n-1\}$ for some $n \in ℕ \setminus \{0\}$. Furthermore, we will always assume that our structures contain the \emph{built-in relation} $\text{BIT}^2$, which is implicitly interpreted in the expected way: $\text{BIT}(i,j)$ is true, iff the $i$'th bit of the binary representation of $j$ is 1. When talking about structures with built-in relations, $\vDash$ includes the interpretation of the built-in relations in the intended way.

We assume the standard encoding of structures as binary strings (see, \eg, \cite{ImmermanBuch}): Relations are encoded row by row by listing their truth values as 0's and 1's. Constants are encoded by the binary representation of their value and thus a string of length $\lceil\log_2(n)\rceil$. A whole structure is encoded by the concatenation of the encodings of its relations and constants except for numerical predicates and constants: These are not encoded, because they are determined by the input length.

Since we want to talk about languages accepted by Boolean circuits, we will need the vocabulary
\[\tString = (\leq^2, S^1)\]
of binary strings. A binary string is represented as a structure over this vocabulary as follows: Let $w \in \{0,1\}^*$ with $|w| = n$. Then the structure representing this string is the structure with universe $\{0, \dots, n-1\}$, $\leq^2$ interpreted as the $\leq$-relation on ℕ restricted to the universe and $x \in S$, iff the $x$'th bit of $w$ is 1. The structure corresponding to string $w$ is denoted by $\mathcal{A}_w$. Also, by the above, $w$ is the encoding of structure $\mathcal{A}_w$.

We denote by \FO not only the set of first-order formulae, but also the complexity class of all languages definable in first-order logic with built-in BIT:
\begin{definition}\label{FOclass}
A language $L \subseteq \{0,1\}^*$ is in \FO if there is an \FO-formula φ over vocabulary $\tString \cup (\text{BIT}^2)$ \stfa $w \in \{0,1\}^*$:
\[w \in L ⇔ \mathcal{A}_w \vDash φ.\]
\end{definition}

We will also use relativized quantifiers. A relativization of a quantifier is a formula restricting the domain of elements considered for that quantifier. More precisely, we write
\[(∃x.φ) \; ψ\]
as a shorthand for $∃x (φ ∧ ψ)$ and, respectively,
\[(∀x.φ) \; ψ\]
as a shorthand for $∀x (φ → ψ) \equiv ∀x (\neg φ ∨ ψ)$.

Furthermore, we consider bounded variants of relativized quantifiers, that is, quantifiers where we only consider the maximal two elements meeting the condition expressed by the relativization. Notation: $∃_\text{b}$, $∀_\text{b}$. Formally, the semantics can be given in \FO as follows:
\[(∃_\text{b} x.φ(x)) \; ψ(x) \equiv \Big(∃x.\big(φ(x) ∧ ∀y∀z \ (y \neq z ∧ x<y ∧ x<z) → (¬φ(y) ∨ ¬φ(z))\big)\Big) \; ψ(x)\]
\[(∀_\text{b} x.φ(x)) \; ψ(x) \equiv \Big(∀x.\big(φ(x) ∧ ∀y∀z \ (y \neq z ∧ x<y ∧ x<z) → (¬φ(y) ∨ ¬φ(z))\big)\Big) \; ψ(x)\]

For the definition of uniform circuit families we will need \FO-interpretations, which are mappings between structures over different vocabularies.

\begin{definition}
Let σ, τ be vocabularies, $τ = (R_1^{a_1}, \dots, R_r^{a_r})$. A first-order interpretation (or \FO-interpretation)
\[I: \struc[σ] \to \struc[τ]\]
is given by a tuple of \FO-formulae $φ_0, φ_1, \dots, φ_r$ over vocabulary σ. For some $k$, $φ_0$ has $k$ free variables and $φ_i$ has $k \cdot a_i$ free variables for all $i \geq 1$. For each structure $\mathcal{A} \in \struc[σ]$, these formulae define the structure
\[I(\mathcal{A}) = \langle |I(\mathcal{A})|, R_1^{I(\mathcal{A})}, \dots, R_r^{I(\mathcal{A})} \rangle \in \struc[τ],\]
where the universe is defined by $φ_0$ and the relations by $φ_1, \dots, φ_r$ in the following way:
\[|I(\mathcal{A})| = \{\langle b^1, \dots, b^k \rangle \mid \mathcal{A} \vDash φ_0(b^1, \dots, b^k)\}\]
\[R_i^{I(\mathcal{A})} = \{(\langle b_1^1, \dots, b_1^k \rangle, \dots, \langle b_{a_i}^1, \dots, b_{a_i}^k \rangle) \in |I(\mathcal{A})|^{a_i} \mid \mathcal{A} \vDash φ_i(b_1^1, \dots, b_{a_i}^k)\}\]
\end{definition}
For better readability, we will write $φ_\text{universe}$ instead of $φ_0$ and $φ_{R_i}$ instead of $φ_i$ \fa $i$.

We will next recall the definition of Boolean circuits and complexity classes defined using them. A circuit is a directed acyclic graph (dag), whose nodes (also called gates) are marked with either a Boolean function (in our case $\land$ or $\lor$), a constant (0 or 1), or a (possibly negated) bit of the input. Also, one gate is marked as the output gate. On any input, a circuit computes a Boolean function by evaluating all gates according to what they are marked with. The value of the output gate then is the function value for that input. If $C$ is a circuit, we denote the function it computes by $C(x)$.\\
When we want circuits to work on different input lengths, we have to consider families of circuits: A family contains a circuit for any input length $n \in \mathbb{N}$. Families of circuits allow us to talk about languages being accepted by circuits: A circuit family $\mathcal{C} = (C_n)_{n \in \mathbb{N}}$ is said to accept (or decide) the language $L$, if it computes its characteristic function $c_L$:
\[C_{|x|}(x) = c_L(x) \textrm{ for all } x.\]
The complexity classes in circuit complexity are classes of languages that can be decided by circuit families with certain restrictions to their resources. The resources relevant here are depth, size and fan-in (number of children) of gates. The depth here is the length of a longest path from any input gate to the output gate of a circuit and the size is the number of non-input gates in a circuit. Depth and size of a circuit family are defined as functions accordingly.\\
Above, we have not restricted the computability of the circuit $C_{|x|}$ from $x$ in any way. This is called non-uniformity, which allows such circuit families to even compute non-recursive functions. Since we want to capture a kind of efficient computability, we need some notion of uniformity. For this, we first define the vocabulary for Boolean circuits as \FO-structures:
\[\tCirc = (E^2, G_\land^1, G_\lor^1, \text{Input}^2, \text{negatedInput}^2, \text{output}^1),\]
where the relations are interpreted as follows:
\begin{itemize}
  \item $E(x,y)$: gate $y$ is a child of gate $x$
  \item $G_\land(x)$: gate $x$ is an and-gate
  \item $G_\lor(x)$: gate $x$ is an or-gate
  \item $\text{Input}(x, i)$: gate $x$ is an input gate associated with the $i$'th input bit
  \item $\text{negatedInput}(x,i)$: gate $x$ is a negated input gate associated with the $i$'th input bit
  \item $\text{output}(x)$: gate $x$ is the output gate
\end{itemize}

We will now define \FO-uniformity of Boolean circuits and the complexity classes relevant in this paper.

\begin{definition}
A circuit family $\mathcal{C} = (C_n)_{n \in \mathbb{N}}$ is said to be first-order uniform (FO-uniform) if there is an FO-interpretation
\[I: \struc[\tString\cup(\textrm{BIT}^2)] \rightarrow \struc[\tCirc]\]
mapping any structure $\mathcal{A}_w$ over \tString with built-in BIT to the circuit $C_{|w|}$ given as a structure over the vocabulary \tCirc. 
\end{definition}

Note that by \cite{baim94} this uniformity coincides with the maybe better known DLOGTIME-uniformity for many familiar circuit classes (and in particular for all classes studied in this paper). All circuit classes we consider in this paper are \FO-uniform.

\begin{definition}
A language $L \subseteq \{0,1\}^*$ is in $\classFont{AC}^i$ if there is an \FO-uniform circuit family with depth $(\log n)^i$ and polynomial size accepting $L$. $\classFont{NC}^i$ is defined analogously with bounded fan-in gates. $\classFont{SAC}^i$ is defined analogously with bounded fan-in ∧-gates and unbounded fan-in ∨-gates.
\end{definition}
We will also call circuit families with the above restrictions on their resources $\classFont{AC}^i$, $\classFont{NC}^i$ and $\classFont{SAC}^i$ circuit families, respectively.

For this paper, the classes \AC, \ACone, \NC and \SAC are of particular interest. It is known that the class \AC coincides with the class $\FO$ \cite{baimst90,ImmermanBuch}: $\AC = \FO$. 

We will next define counting variants of the above classes. The idea for counting classes in general is to use a model of computation and identify a kind of witness for acceptance in that model. For a nondeterministic Turing machine, we usually consider the accepting paths on a given input as witnesses. Considering polynomial time computations, this concept gives rise to the class \cP.
A witness that a Boolean circuit accepts its input is a so-called \emph{proof tree}: a minimal subtree of the circuit showing that it evaluates to true for a given input. For this, we first unfold the given circuit into tree shape, and we further require that it is in \emph{negation normal form} (meaning that negations only occur directly in front of literals)---note that this is always the case for \tCirc-structures, though. A proof tree then is a subtree that contains the output gate, for every included $\lor$-gate exactly one child and for every included $\land$-gate all children, \ST every input gate which we reach in this way is a true literal. This allows us to define the following counting complexity classes:

\begin{definition}
A function $f\colon \{0,1\}^* \rightarrow \mathbb{N}$ is in $\#\classFont{AC}^i$ ($\#\classFont{NC}^i$, $\#\classFont{SAC}^i$) if there is an $\classFont{AC}^i$ ($\classFont{NC}^i$, $\classFont{SAC}^i$) circuit family $\mathcal{C} = (C_n)_{n \in \mathbb{N}}$ such that for all $w \in \{0,1\}^*$,
\[f(w) = \textrm{number of proof trees of } C_{|w|}(w).\]
\end{definition}

Note that, already at the first level, the classes $\#\classFont{AC}^1$, $\#\classFont{NC}^1$, $\#\classFont{SAC}^1$, though based on relatively close circuit classes,  have a rather different computational power. It can be seen, through the connections between $\classFont{SAC}^1$ circuits and multiplicatively disjoint circuits (see~\cite{Malod:2008ub}) that  $\#\classFont{NC}^1$ and $\#\classFont{SAC}^1$ are subclasses of $\#\classFont{P}$. On the contrary, the class $\#\classFont{AC}^1$ can output numbers bigger than $2^{n^{\log n}}$ for input of size $n$, hence numbers of super-polynomial sizes in $n$. This comes from the fact that the unfolding of a polynomial size, logarithmic depth circuit with  unbounded fan-in may be of size $n^{O(\log n)}$. This means that $\#\classFont{AC}^1\not\subseteq \#\classFont{P}$.

Similarly, one can identify witnesses for acceptance in first-order logic. One possibility for this is to do this in terms of the model-checking game defined as follows. The model checking game for \FO is the two-player game with the players ``verifier'' and ``falsifier'' played recursively on any \FO formula φ and input structure $\mathcal{A}$. The verifier tries to reach an atom that is satisfied while the falsifier tries to reach an atom that is not satisfied. For this, the game starts on the whole formula. From there, depending on the outermost operator or quantifier, one of the players makes a choice and the game continues on a certain sub-formula. The rules for this are as follows:
\begin{itemize}
  \item $\exists x ψ$: verifier chooses a value for $x$, continue on ψ
  \item $\forall x ψ$: falsifier chooses a value for $x$, continue on ψ
  \item $α \lor β$: verifier chooses whether to continue with $α$ or $β$
  \item $α \land β$: falsifier chooses whether to continue with $α$ or $β$
  \item $\neg α$: verifier and falsifier swap roles, continue on $α$
  \item For any atom : verifier wins if it is true, falsifier wins otherwise
\end{itemize}
In this game the verifier has a winning strategy---that is, a strategy that lets him win the game independent of the choices of the opponent---if and only if $\mathcal{A} \vDash φ$. This means that winning strategies in this game can be seen as witnesses for acceptance in first-order logic, which allows us to define a counting class based on \FO. 

\begin{definition} 
A function $f\colon \{0,1\}^* → ℕ$ is in \WinFO, if there is an \FO-formula φ over vocabulary $\tString \cup (\text{BIT}^2)$ \stfa $w \in \{0,1\}^*$:
\[f(w) = \text{CWin}(φ, \mathcal{A}_w),\]
where $\text{CWin}(φ, \mathcal{A}_w)$ is the number of winning strategies of the verifier in the model checking game for $\mathcal{A}_w \vDash φ$.
\end{definition}

As was shown in~\cite{HaakV16}, the counting versions of \AC and \FO coincide, \ie: $\cAC = \WinFO$.

For the quantifiers $∃_\text{b}$ and $∀_\text{b}$ we define the following rules in the model checking game for \FO: Here, the choosing player is restricted to the maximal two elements satisfying the relativization.

\section{GPR}\label{sect:GPR}

	We aim to characterize counting classes from circuit complexity beyond \cAC by counting winning strategies in different logics. It has been proved in~\cite{ComptonL90} that \NC can be characterized using \FO with a certain kind of linear recursion, called relational primitive recursion (RPR). It allows the recursive definition of predicates in the following way:
	\[[P(\tu x, y) \equiv θ(\tu x, y, P(\tu x, y-1))]\]
	\noindent where intuitively, $P(\tu x,y)$ has the same truth value as $θ(\tu x, y, P(\tu x, y-1))$ for $y>0$ and $P(\tu x,0)$ being equivalent to $θ(\tu x, y, \bot)$. Then, $\FO(\classFont{RPR})$ denotes the class of languages definable by formula of the form:
	
	\[[P(\tu x, y) \equiv θ(\tu x, y, P(\tu x, y-1))] \; \varphi(P)\]
	
	\noindent where $\varphi(P)$ is first-order and make use of the inductively defined $P$.
	Over structures with built-in BIT, it holds that $\NC = \FO(\classFont{RPR})$ \cite{ComptonL90}. 
	This characterization does not immediately generalize to classes  \SAC and \ACone as well as counting classes. However, inspired by this,  we define a different kind of inductive definition called \textbf{g}uarded \textbf{p}redicative \textbf{r}ecursion, \gpr for short, that allows us to capture all these classes in a unified way.

	\begin{definition}[\gpr]\label{def:gpr}
		 A formula $\phi\in \FO(\gpr)$ if it is of the form:
		
		\[\varphi ::= [P(\tu x, \tu y) \equiv \theta(\tu x,\tu y, P)] \; \varphi(P) \; | \; \psi\]

		\noindent where $\psi\in \FO$  and $\theta\in \FO$  with free variables $\tu x, \tu y$ such that each  atomic sub-formula involving symbol $P$
		
		\begin{enumerate}
			\item\label{cond1}  is of the form $P(\tu x,\tu z)$ where $\tu z$ is in the scope of a guarded quantification $Q\tu z.(\tu z\leq \tu y/2)$ with $Q\in \{\forall, \exists\}$ and
			\item\label{cond2} never occur in the scope of any quantification not guarded in this way.
		\end{enumerate}

    We also call the part in $[\makebox[1ex]{\textbf{•}}]$ a \gpr-operator.	We define  $\FO(\gpr_\bounded)$ similary by allowing only bounded variants for guarded quantification $Q_b \tu z.(\tu z < \tu y / 2)$ and $\FO(\gpr_\semi)$ for which universal guarded  $\forall\tu z.(\tu z\leq \tu y/2)$ and  bounded existential guarded $\exists_b \tu z.(\tu z < \tu y / 2)$  quantifications are allowed.
		\end{definition}
		
This approach is flexible enough to easily  express problems computable by small circuit classes.

\begin{example}
	The \textsc{shortcake} problem, proved $\ACone$-complete in \cite{ChandraTompa:1988}  is defined as follows. Two players, $H$ (or $0$) and $V$ (or $1$)  are alternately moving a token on an $n\times n$ Boolean matrix $M$. A configuration of the game is a contiguous submatrix of $M$ given that the indices of its first and last lines and columns $(i_0,j_0,i_1,j_1)$. It is given, at each round, with an indication of  which corner of this submatrix the token is on  and whose turn it is.  In the beginning of the game, the configuration is thus $(1,n,1,n)$, the token is at the $(1,1)$ corner and $H$ starts to play.  In his turn, $H$  tries to move the token horizontally in the  submatrix to some entry~\footnote{Supposing $j$ different from both $j_0,j_1$ allows to forget the precise corner where the token is and simplify in a non essential way the formula} $(i_0,j)$, $j\neq j_0,j_1$ satisfying $M_{i_0,j} = 1$. After, $H$'s move  either all columns to the left of $j$ or all columns to the right of $j$ are removed from the current submatrix, whichever number of columns is greater, leaving the token once again on a corner of the current submatrix. I.e.  the new configuration is $(i_0,j,i_1,j_1)$ if $j-j_0\leq j_1-j$ and $(i_0,j_0,i_1,j)$ if not.  In his turn,  $F$ plays similarly but vertically, on the rows. The first player with no move left loses.
	
	We encode the matrix by a structure representing a binary word of length $n^2$. Remark, that the size of the matrix is divided by at least two after each round. 
	The existence of a winning strategy for $H$ is encoded by the following $\FO(\gpr)$ formula ($s$ is an upper bound for the size of the matrix at each round with some padding, $p=0,1$ is for the players), 
	
	\[\phi \defeq [P(x,\underbrace{s, i_0,i_1,j_0,j_1,p}_{\tu y}) \equiv (p=0\wedge \theta_H(x,\tu y)) \vee (p=1\wedge \theta_V(x,\tu y))] \; \varphi(P) \]

	with $\varphi(P) \equiv P(n,2n^2,1,n,1,n,0)$ and $\theta_H(x, \tu y)$ is
	
	\[
\begin{array}{l}
	\exists \tu z=(s',i_0',i_1',j_0',j_1',p').(\tu z<  \tu y/2)\\ 
\qquad \left\{	\begin{array}{l}
(s'\leq s/2-2) \wedge  p'=1 \wedge i'_0=i_0 \wedge i'_1=i_1 \wedge P(x,s,i_0',j_0',i_1',j_1',p')) \wedge  \\
\big[ (M(i_0'n+j_0') \wedge j_0'\neq j_0,j_1 \wedge (j_1 - j_0)/2 \leq j_0' \leq j_1 \wedge j'_1=j_1)  \vee\\
(M(i_0'n+j_1') \wedge j_1'\neq j_0,j_1 \wedge j_0\leq j_1' \leq (j_1 - j_0)/2  \wedge j'_0=j_0) )\big]
\end{array}
\right.
\end{array}
	\]
	
	The formula $\theta_V(x,\tu y)$ associated to $V$ is defined similarly (but with universally guarded  quantification for $z$ and (row) $i$). In~\cite{ChandraTompa:1988} a variant of this game, called~\textsc{semicake} is shown to be $\SAC$-complete: it is easily definable along the same lines in $\FO(\gpr_\semi)$.
\end{example}





We now introduce a certain normal-form for circuits showing membership in \NC, \SAC and \ACone, which will be needed for our later proofs. Note that due to built-in BIT, we have an order and arithmetic on the gates of circuits from uniform circuit families. Circuit families in our normal-form have the following properties: All tuples of the appropriate size are gates (so $φ_\text{universe}$ from the \FO-interpretation showing uniformity is always true). The ∧-gates are exactly the gates that are odd and neither input nor negated input gates. The ∨-gates are exactly the gates that are even and neither input nor negated input gates. Children of gates are smaller than half of each of their parents. 

\begin{lemma}\label{lem:circuitNormalform}
Let $\mathfrak{C} \in \{\NC, \SAC, \ACone, \#\NC, \#\SAC, \#\ACone\}$ and $L \in \mathfrak{C}$. Then there is an \FO-interpretation $I: \struc[\tString] \cup (\textnormal{BIT}^2) \to \struc[\tCirc]$ with tuple size $k \in \mathbb{N}$ that uniformly describes a circuit family showing $L \in \mathfrak{C}$ \stfa $w \in \{0,1\}^*$:
\begin{enumerate}
  \item $|I(\mathcal{A}_w)| = |\mathcal{A}|^k$
  \item \fa $\tu x \in |I(\mathcal{A}_w)|$:\\
  $G_\land^{I(\mathcal{A}_w)}(\tu x) \Leftrightarrow (\neg \textnormal{Input}^{I(\mathcal{A}_w)}(\tu x) \land \neg \textnormal{negatedInput}^{I(\mathcal{A}_w)}(\tu x) \land \tu x \textnormal{ is odd})$\\
  $G_\lor^{I(\mathcal{A}_w)}(\tu x) \Leftrightarrow (\neg \textnormal{Input}^{I(\mathcal{A}_w)}(\tu x) \land \neg \textnormal{negatedInput}^{I(\mathcal{A}_w)}(\tu x) \land \tu x \textnormal{ is even})$

  \item \fa $\tu x, \tu y \in |I(\mathcal{A}_w)|$:\\
  $E^{I(\mathcal{A}_w)}(\tu x, \tu y) ⇒ ∃\tu y' 2 • \tu y = \tu y' ∧ \tu y' < \tu x$ 
\end{enumerate}
\end{lemma}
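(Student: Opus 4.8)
The plan is to start from an arbitrary \FO-uniform circuit family witnessing $L\in\mathfrak{C}$, given by some interpretation $I_0$, and to push it through two value- and proof-tree-preserving transformations until it has the required shape. Throughout I will use the standard fact that, from built-in $\BIT$, addition, multiplication and comparison of numbers in $[0,n^{O(1)})$ as well as the predicate ``$y=2^{z}$'' are all \FO-definable; hence any polynomially bounded renumbering of gates that is described by such arithmetic is again \FO and can be composed with $I_0$.

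First I would \emph{unroll the circuit by levels}. Every class in $\mathfrak{C}$ lies at level $1$, so the circuit has depth $d=O(\log n)$. I replace each gate $g$ by copies $(\ell,g)$ for $\ell\in\{0,\dots,d\}$, keeping $(\ell,g)$ an input (resp.\ negated-input) gate with the same associated bit exactly when $g$ is one and otherwise giving it the type of $g$; I put $(\ell,g)\to(\ell-1,g')$ whenever $g\to g'$ in the original circuit, and take $(d,g_{\mathrm{out}})$ as output. An induction on $\ell$ shows that once $\ell$ is at least the longest input-to-$g$ distance, $(\ell,g)$ computes the value of $g$ and has the same number of proof trees (OR-copies sum the counts of their children, AND-copies multiply them, input copies contribute $1$ or $0$); since $d$ bounds this distance for $g_{\mathrm{out}}$, the new family computes the same function with the same proof-tree count. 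Fan-in is untouched, so $\mathfrak{C}$ is preserved and the size grows only by the factor $d+1$. The crucial gain is that now every edge strictly decreases the level.

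Second I would \emph{renumber arithmetically}. Writing each original gate as a number $g<N\le 2^{m}$ with $m=O(\log n)$, I send the copy $(\ell,g)$ to $2^{\,m+1+3\ell}+2g+t$, where $t=1$ for $\land$-gates and $t=0$ otherwise. The least significant bit is $t$, which yields Condition~2 for internal gates (input and negated-input gates satisfy it trivially, since for them the right-hand side is false). The blocks of consecutive levels are separated by the factor $2^{3}$, so twice any level-$(\ell-1)$ number stays below every level-$\ell$ number; as edges decrease the level, this is exactly Condition~3 in the form $E(\tu x,\tu y)\Rightarrow 2\tu y<\tu x$. The largest number used is $2^{O(\log n)}=n^{O(1)}$, so every gate fits into $\size{\mathcal{A}}^{k}$ for a constant $k$; I then set $\varphi_{\mathrm{universe}}\equiv\top$ and turn each tuple outside the image of the encoding into an isolated childless gate whose type is read off its parity (odd $\mapsto\land$, even $\mapsto\lor$). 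This makes Condition~1 hold and keeps Condition~2 true, while adding no proof tree below the output. Decoding a number back into its level, its underlying gate and its type uses only the \FO-definable arithmetic above together with the formulas of $I_0$, so the whole interpretation is \FO.

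The hard part is Condition~3, the factor-two gap forcing children below half their parents: it cannot be met by a linear layering and is precisely what dictates the exponential block spacing above. What makes this affordable is the logarithmic depth, which keeps $2^{\Theta(d)}$ polynomial and the gate numbers within a constant number of coordinates; the remaining subtlety is the verification, needed for the counting classes $\#\NC,\#\SAC,\#\ACone$, that level unrolling leaves the proof-tree count invariant, which is exactly what the induction in the first step establishes.
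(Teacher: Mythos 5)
Your proof is correct and takes essentially the same route as the paper's: make every tuple a gate, read the gate type off a parity bit, and layer the circuit by depth with the layer index placed in the high-order bits so that numerical values shrink by more than a factor of two along every edge. The only differences are cosmetic --- you use power-of-two block offsets where the paper uses a unary depth prefix of paired ones, and you spell out the proof-tree-preservation induction that the paper leaves implicit.
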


\begin{proof}
Properties 1 and 2 are straightforward. For property 3, a certain unary encoding of the depth can be added to the encoding of gates in order to halve the numerical value of gates in each step from parent to child.

A formal proof can be found in the appendix.

%
%
\end{proof}

\section{Logical Characterizations of Small Depth Decision Classes}\label{sect:dec}

We now show that the newly defined logics characterize the classes \NC, \SAC and \ACone, respectively.

\begin{theorem}\label{thm:mainDecision}
\leavevmode
\begin{enumerate}
  \item $\NC = \FO(\gpr_\bounded)$
  \item $\SAC = \FO(\gpr_\semi)$
  \item $\ACone = \FO(\gpr)$
\end{enumerate}
\end{theorem}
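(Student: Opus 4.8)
The plan is to prove each of the three equalities by two inclusions, in every case using Immerman's theorem $\AC=\FO$ together with the circuit normal form of Lemma~\ref{lem:circuitNormalform} as the bridge between the two worlds. The normal form is exactly what makes the formalisms line up: every gate is a $k$-tuple $\tu g$, the parity of $\tu g$ decides whether it is an $\land$- or an $\lor$-gate, and by property~3 every child of $\tu g$ has value below $\tu g/2$. This last fact is precisely the guard $\tu z \leq \tu y/2$ of the \gpr-operator, so the recursion variable of a \gpr-formula can play the role of the gate index and the halving guard the role of ``edge into a child''.

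For the inclusion $\mathfrak C \subseteq \FO(\gpr_?)$ (circuit to logic) I would let the recursively defined predicate express gate evaluation. For $\ACone$ and $\NC$, set $P(\tu g)$ to mean ``gate $\tu g$ evaluates to true''; the step formula $\theta$ reads the gate type off the parity of $\tu g$ and puts $P(\tu g)$ equal to the corresponding combination of children: an $\lor$-gate becomes a guarded existential $\exists\tu z.(\tu z\leq \tu g/2)\,(E(\tu g,\tu z)\land P(\tu z))$, an $\land$-gate a guarded universal, and (negated) input gates are handled by the matching literal. Bounded fan-in is captured by the bounded guards $\exists_b,\forall_b$, and the output formula is $\varphi(P)\equiv P(\text{output})$. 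This yields $\ACone\subseteq\FO(\gpr)$ and $\NC\subseteq\FO(\gpr_\bounded)$ directly. For $\SAC$ the fan-in pattern (unbounded $\lor$, bounded $\land$) is opposite to what $\gpr_\semi$ (unbounded $\forall$, bounded $\exists_b$) offers, so there I would instead let $P(\tu g)$ mean ``gate $\tu g$ evaluates to \emph{false}'': ``$\lor$-gate false'' is ``all children false'' (an unbounded $\forall$) and ``$\land$-gate false'' is ``some child false'' (a bounded $\exists_b$), which matches $\gpr_\semi$ exactly, outputting $\neg P(\text{output})$.

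For the converse $\FO(\gpr_?)\subseteq\mathfrak C$ (logic to circuit) I would build the formula-value circuit: guarded quantifiers become $\lor$/$\land$-gates (bounded guards giving fan-in two), and the $P$-free first-order parts of $\theta$ are realized via $\AC=\FO$. The halving in the guard bounds the number of nested recursion levels by $\log(n^{k})=O(\log n)$, i.e.\ exactly the log-depth budget. For $\ACone$ this finishes the argument immediately: each level is a constant-depth unbounded-fan-in block, so $O(\log n)$ levels give depth $O(\log n)$ with unbounded fan-in. For $\SAC$ the direct translation of $\gpr_\semi$ produces a circuit with unbounded $\land$ and bounded $\lor$; dualizing (using that $\SAC$ is closed under complement) turns it into an $\SAC$-circuit.

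The step I expect to be the main obstacle is the depth control in the $\NC$ and $\SAC$ directions, and it is also where the argument must depart from naive unfolding and from \cite{ComptonL90}. A naive simulation spends a full $\AC=\FO$, hence bounded-fan-in depth $O(\log n)$, at \emph{each} of the $O(\log n)$ recursion levels, giving depth $O(\log^2 n)$ --- a $\classFont{NC}^2$ circuit, and likewise too deep for $\SAC$. The way out is to observe that for a \emph{bounded} guard the relativization $\tu z<\tu y/2$ is purely numerical, so the at most two active recursive calls are the two numerically largest values below $\tu y/2$; these depend only on $\tu y$, not on the input $w$. Hence the recursive part of the circuit is a bounded-fan-in tree with input-independent wiring and depth $O(\log n)$, while all input-dependent first-order guard and transition data can be computed once, in a single $\AC$-layer of depth $O(\log n)$, and then fed into the recursion. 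Composing the two gives total depth $O(\log n)$. Making this precise requires strengthening the normal form so that the children of each bounded-fan-in gate actually sit at those two top positions below half the parent, so that they coincide with the elements ranged over by $\exists_b/\forall_b$; this bookkeeping, together with the separation of the one-time first-order precomputation from the repeated bounded recursion, is the technical heart of the proof.
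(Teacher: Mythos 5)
Your proposal has the same architecture as the paper's proof: Lemma~\ref{lem:circuitNormalform} plus a recursively defined gate-evaluation predicate for the circuit-to-logic direction, and chaining of $\theta$-blocks for the converse; and your resolution of the depth problem for $\NC$ is the same observation the paper makes, only phrased differently. You argue that under a bounded guard the two active recursive calls are numerically determined, so the recursion skeleton is an input-independent fan-in-two tree of depth $O(\log n)$ fed by a one-shot bounded-fan-in precomputation of depth $O(\log n)$; the paper argues that, since $P$ occurs only under the (constantly many) bounded guarded quantifiers and Boolean connectives of the fixed formula $\theta$, the $P$-marked gates sit at \emph{constant} depth inside each $\theta$-block, so the critical path through the recursion has length $O(\log n)\cdot O(1)$ and any detour into a $P$-free subformula costs $O(\log n)$ only once. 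These are the same accounting. The one place you genuinely depart from the paper is $\SAC$: you dualize (let $P(\tu g)$ mean ``$\tu g$ evaluates to false'' and output $\neg P$), whereas the paper reuses the identical formula with $P$ meaning ``true'' in both directions and asserts that the semi-unboundedness of $\gpr_\semi$ ``directly corresponds'' to that of $\SAC$. Your dualization is what a literal reading of Definition~\ref{def:gpr} (unbounded $\forall$, bounded $\exists_\text{b}$) forces, and it is sound for the decision class because $\SAC$ is closed under complement; what it buys you in type-correctness it costs in generality, since it would not survive the counting extension (Theorem~\ref{thm:mainCounting}), where the number of winning strategies must equal the number of proof trees and complementation destroys that correspondence --- there the \emph{existential} (summing) quantifier must be the unbounded one, matching the unbounded $\lor$-gates, which is how the paper's counting argument reads the operator. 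Finally, your concern about strengthening the normal form so that the two children of a bounded-fan-in gate occupy the two top positions below $\tu y/2$ is legitimate but avoidable: the paper instead leaves $\varphi_E(\tu y,\tu z)$ inside the relativization of the bounded quantifier, so that ``the maximal two elements satisfying the relativization'' are exactly the (at most two) children, at the price of a formula whose guard is not literally of the form required by Definition~\ref{def:gpr}.
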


\begin{proof}
\underline{$\ACone \subseteq \FO(\gpr)$}: Let $L \in \ACone$ via the \FO-uniform \ACone circuit family $\mathcal{C} = (C_n)_{n \in \mathbb{N}}$ with the properties from Lemma \ref{lem:circuitNormalform} and $C_n$ has depth at least 1 for all $n$. The latter can easily be achieved by adding a new ∧-gate as output-gate with the old output-gate being its only child. Let
\[I = (φ_\textnormal{universe}, φ_{G_\land}, φ_{G_\lor}, φ_\textnormal{Input}, φ_\textnormal{negatedInput}, φ_E, φ_\textnormal{output})\]
be an \FO-interpretation showing that $\mathcal{C}$ is uniform. %
Furthermore, let
\[φ_\textnormal{Literal}(\tu x) \defeq \exists \tu i (φ_\textnormal{Input}(\tu x, \tu i) \lor φ_\textnormal{negatedInput}(\tu x, \tu i)),\]
\[φ_\textnormal{trueLiteral}(\tu x) \defeq \exists \tu i (φ_\textnormal{Input}(\tu x, \tu i) \land S(\tu i) \lor φ_\textnormal{negatedInput}(\tu x, \tu i) \land \neg S(\tu i)) \text{ and}\]
\[ψ(\tu z, P(\tu z)) = P(\tu z) ∧ ¬φ_\text{Literal}(\tu z) ∨ φ_\text{trueLiteral}(\tu z).\]
Then the following $\FO(\gpr)$-formula defines $L$:
\[Φ \defeq [P(\tu y) \equiv θ(\tu y, P)] \; ∃\tu o (φ_\text{output}(\tu o) ∧ P(\tu o))\]
with
\begin{align*}
θ(\tu y, P) &\defeq \Big(\text{Even}(\tu y) ∧ \big((∃\tu z.(\tu z < \tu y/2 ∧ φ_E(\tu y, \tu z))) \; ψ(\tu z, P(\tu x, \tu z))\big)\Big) ∨\\
&\phantom{\defeq \ \,} \Big(\text{Odd}(\tu y) ∧ \big((∀\tu z.(\tu z < \tu y/2 ∧ φ_E(\tu y, \tu z))) \; ψ(\tu z, P(\tu x, \tu z))\big)\Big).
\end{align*}
Even and Odd check the parity of the least significant bit within the least significant variable within tuple $\tu y$ using BIT. Note that $φ_E(\tu y, \tu z)$ within the relativization for $\tu z$ can be moved outside the relativization, so Φ is equivalent to a $\FO(\gpr)$-formula.

Since $\text{Odd}(\tu y) \equiv ¬\text{Odd}(\tu y)$, we can write θ as
\[θ(\tu y, P) \equiv \big(Q\tu z.(\tu z < \tu y/2 ∧ φ_E(\tu y, \tu z))\big) \; P(\tu z) ∧ ¬φ_\text{Literal}(\tu z) ∨ φ_\text{trueLiteral}(\tu z)\]
where $Q$ is either ∃ or ∀ depending on the parity of $\tu y$.

Let $n \in \mathbb{N}$ and $w \in \{0,1\}^n$. We now prove that the predicate $P$ in the above formula is the valuation for the gates in circuit $C_n$. By definition, on input structure $\mathcal{A}_w$, the formulae from $I$ used above give access to $C_n$. We prove inductively that for any $k \in \mathbb{N}$, $P(\tu g)$ gives the value of gate $\tu g$ in $C_n$ on input $w$ if all children of $\tu g$ have depth $\leq k$.

$k = 0$: Note that $φ_\textnormal{trueLiteral}(\tu h)$ gives the value of $\tu h$ in $C_n$ on input $w$ if $\tu h$ is an input gate. Then for gates $\tu g$ all children of which are input gates we have:
\begin{equation*}\label{eq:IAdec}
P(\tu g) \equiv \big(Q\tu z.(\tu z < \tu g/2 \land φ_E(\tu g,\tu z))\big) \; \big(\underbrace{P(\tu z) \land \underbrace{\neg φ_\textnormal{Literal}(\tu z)}_\textnormal{false}}_\textnormal{false} \lor φ_\textnormal{trueLiteral}(\tu z)\big).\tag{$\star$}
\end{equation*}
By assumption, if $φ_E(\tu g, \tu z)$ then $\tu z < \tu g/2$, and thus
\[\tu z < \tu g/2 \land φ_E(\tu g, \tu z) \equiv φ_E(\tu g, \tu z).\]
This yields
\[P(\tu g) \equiv \big(Q\tu z.φ_E(\tu g, \tu z)\big) \; φ_\textnormal{trueLiteral}(\tu z)\]
This means that $P$ actually gives the value of $\tu g$.

$k \to k+1$: Again, by assumption,
\[\tu z < \tu g/2 \land φ_E(\tu g, \tu z) \equiv φ_E(\tu g, \tu z).\]
We also know that for all children $\tu z$ of $\tu g$ only two cases can occur:

If $\tu z$ is an input gate, then $\neg φ_\textnormal{Literal}(\tu z)$ is false and $φ_\textnormal{trueLiteral}(\tu z)$ gives the value of $\tu z$.

If $\tu z$ is not an input gate, then $φ_\textnormal{trueLiteral}(\tu z)$ is false, $¬φ_\textnormal{Literal}$ is true and $P(\tu z)$ gives the value of $\tu z$ by induction hypothesis.

By (\ref{eq:IAdec}) this means that $P(\tu g)$ actually gives the value of $\tu g$.

Since $P$ gives the value of arbitrary non-input gates in $C_n$ on input $w$ for any $n$ and $w$ and we assumed that the output gate is not an input gate, it is easy to see that the above formula defines $L$: The formula behind the recursive definition of $P$ simply states that the output gate of the circuit evaluates to true.

\underline{$\FO(\gpr) \subseteq \ACone$}:
At first assume that only one occurrence of \gpr-operators is allowed. The proof easily extends to the general case. Furthermore, we begin by proving the result without negations in θ. We will explain how to handle arbitrary $\FO(\gpr)$-formulae afterwards.

Let $L \in \FO(\gpr)$ via the formula
\[[P(\tu x, \tu y) \equiv θ(\tu x, \tu y, P)] \; φ(P).\]
By definition of $\FO(\gpr)$, $P$ occurs in θ only in the form $P(\tu x, \tu z)$, where $\tu z$ is in the scope of a guarded quantification $Q\tu z.(\tu z < \tu y/2)$ with $Q \in \{∃,∀\}$ and not in the scope of any unguarded quantification.

Ignoring occurrences of $P$, φ is an \FO-formula. Hence, we can build an \AC circuit family evaluating φ except for these occurrences.

In order to compute the predicate $P$ we proceed as follows:

θ is also an \FO-formula except for occurrences of $P$, so we can build for all $\tu x, \tu y$ a \AC circuit that computes $θ(\tu x, \tu y, P)$ with certain input gates marked with $P(\tu x, \tu z)$. The circuit can easily be built in a way that $\tu z$ is part of the encoding of gates that are marked with $P(\tu x, \tu z)$. Thus, we can remove the marks and instead connect each gate that was marked with $P(\tu x, \tu z)$ to the output gate of the subcircuit computing $P(\tu x, \tu z)$. Since occurrences of $P(\tu x, \tu z)$ only occur within guarded quantifications $Q\tu z.(\tu z < \tu y/2)$, there can be at most logarithmically many steps from any $P(\tu x, \tu y)$ before reaching $P(\tu x, \tu 0)$, terminating the recursion. By the above, each such step---computing $P(\tu x, \tu y)$, when given values of $P(\tu x, \tu z)$ for certain $\tu z$---can be done in constant depth leading to logarithmic depth in total. 

The gates computing values of $P$ can now be connected to the \AC circuit family evaluating φ as needed. This leads to an \ACone circuit family evaluating the whole formula.

Next, we talk about the case of θ containing negations. For this, we use the same construction as above, but add a negated version of each gate. We do this by adding a negation-bit to the encoding of all gates (possibly with padding). This is toggled exactly when negations occur in the quantifier-free part.
For example, consider a subformula $α = β ∧ \neg γ$ and assume there was no negation around α. Then we have a gate $g$, which will compute the truth-value of α, for which the negation-bit is 0. We connect this to the gate for the truth-value of β with negation-bit 0 and---since there is a negation around γ---the gate for the truth-value of γ with negation-bit 1.

Apart from constructing the connections in this way, the negation-bit also changes the gate-type of gates: If a non-negated gate is a ∨-gate, the negated version is a ∧-gate and vice versa. Also, negated gates computing the value of literals also use the negated version of the respective literal compared to the non-negated version.

In total, this construction only doubles the size of the circuit and does not change its depth, but handles arbitrary negations.

For the case of multiple \gpr-operators, we build a circuit for each of them in the above way. In case of nesting, we start from the innermost operator. Adequate connections between the different circuits are easily doable and size and depth of the combination of all those circuits still stays within the desired bounds.

\underline{$\NC \subseteq \FO(\gpr_\bounded)$ and $\SAC \subseteq \FO(\gpr_\semi)$}: Can be shown with the same formula and the same proof as $\ACone \subseteq \FO(\gpr)$, replacing $\gpr$ by $\gpr_\bounded$ or $\gpr_\semi$, respectively.

\underline{$\FO(\gpr_\bounded) \subseteq \NC$}: This can be proven completely analogously to $\FO(\gpr) \subseteq \ACone$. Instead of \AC circuit families for evaluation of φ and θ, we now use \NC circuit families. This leads to logarithmic depth for evaluation of θ. In general, repeating this for logarithmically many steps would be a problem. By definition there are no occurrences of $P$ inside any unbounded quantifier, though. For the bounded quantifiers, we still create subcircuits for all possible values for the quantified variable, but we only connect the maximal two satisfying the relativization to the parent. This ensures that gates marked with $P(\tu x, \tu z)$ for some $\tu x, \tu z$ still only occur in constant depth in the circuit evaluating $θ(\tu x, \tu y, P)$, this time with only bounded fan-in gates. Consequently, the construction still only leads to logarithmic depth in total.

\underline{$\FO(\gpr_\semi) \subseteq \SAC$}: Here, the same trick as for \NC can be used. θ can be evaluated using an \NC circuit family which is also an \SAC circuit family. Also, the semi-unboundedness of the quantifiers around occurrences of $P$ directly corresponds to the semi-unboundedness in \SAC circuit families.

\end{proof}

The proof of the inclusion $\ACone \subseteq \FO(\gpr)$ also immediately gives us the following normal-form for our logical classes.
\begin{corollary}
Let $\mathcal{G} \in \{\gpr, \gpr_\bounded, \gpr_\semi\}$. Then
\[\FO(\mathcal{G}) = \mathcal{G}\classFont{-FO},\]
where $\mathcal{G}\classFont{-FO}$ denotes the class of languages decidable in first-order logic with one \gpr-operator in the beginning.
\end{corollary}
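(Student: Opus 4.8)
The plan is to prove the two inclusions separately, handling one by a trivial syntactic observation and the other by routing through the circuit characterization already established in Theorem~\ref{thm:mainDecision}. Throughout, let $\mathfrak{C}$ denote the Boolean class matched to $\mathcal{G}$, that is $\ACone$ for $\mathcal{G}=\gpr$, $\NC$ for $\mathcal{G}=\gpr_\bounded$, and $\SAC$ for $\mathcal{G}=\gpr_\semi$.

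The inclusion $\mathcal{G}\classFont{-FO} \subseteq \FO(\mathcal{G})$ is immediate. A formula consisting of a single \gpr-operator (in the variant corresponding to $\mathcal{G}$) placed at the front, followed by a first-order formula $\varphi(P)$, is by Definition~\ref{def:gpr} already a well-formed $\FO(\mathcal{G})$-formula; hence $\mathcal{G}\classFont{-FO}$ is literally a syntactic subclass of $\FO(\mathcal{G})$. For the reverse inclusion I would exploit the round-trip through circuits: by Theorem~\ref{thm:mainDecision} we have $\FO(\mathcal{G}) = \mathfrak{C}$, and the point is that the completeness direction of that theorem already delivers a normal-form formula. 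Concretely, the proof of $\ACone \subseteq \FO(\gpr)$ constructs
\[
\Phi \defeq [P(\tu y) \equiv \theta(\tu y, P)] \; \exists \tu o\,(\varphi_\textnormal{output}(\tu o) \wedge P(\tu o)),
\]
which has a single \gpr-operator in the outermost position followed by the first-order sentence $\exists \tu o\,(\varphi_\textnormal{output}(\tu o) \wedge P(\tu o))$, so it lies in $\gpr\classFont{-FO}$. Since the proofs of $\NC \subseteq \FO(\gpr_\bounded)$ and $\SAC \subseteq \FO(\gpr_\semi)$ use the very same formula, changing only the guarded-quantifier variant, the analogous conclusion holds for all three cases, giving $\mathfrak{C} \subseteq \mathcal{G}\classFont{-FO}$.

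Combining the two steps yields the chain
\[
\mathcal{G}\classFont{-FO} \subseteq \FO(\mathcal{G}) = \mathfrak{C} \subseteq \mathcal{G}\classFont{-FO},
\]
so all three classes coincide and in particular $\FO(\mathcal{G}) = \mathcal{G}\classFont{-FO}$. I do not expect any step to pose a real obstacle, since the substance is entirely inherited from Theorem~\ref{thm:mainDecision}; the only thing to verify is the purely syntactic fact that the witnessing formula from the completeness direction carries its unique \gpr-operator at the front, which is evident from its explicit form. The single point deserving a line of care is to apply the variant-to-class correspondence ($\gpr \mapsto \ACone$, $\gpr_\bounded \mapsto \NC$, $\gpr_\semi \mapsto \SAC$) consistently, ensuring that the normal-form formula one lands on is in the same variant $\mathcal{G}$ one started with.
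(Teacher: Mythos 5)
Your proposal is correct and matches the paper's intended argument exactly: the paper derives this corollary as an immediate consequence of the proof of $\ACone \subseteq \FO(\gpr)$ (and its bounded/semi-unbounded analogues), whose witnessing formula has a single \gpr-operator at the front, combined with the trivial syntactic inclusion in the other direction. Nothing further is needed.
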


\section{Logical Characterizations of Small Depth Counting Classes}\label{sect:count}

Next, we want to define a game semantics for our new logics. The game we define will correspond to model-checking and is defined analogous to the model checking game for \FO for the most part. When playing the game on an $\FO(\gpr)$-formula
\[[P(\tu x, \tu y) \equiv θ(\tu x, \tu y, P] \; φ(P),\]
it begins on formula φ. The only difference to the model checking for \FO is an additional case for atoms of the form $P(\tu a, \tu b)$. In this case, the game continues on the formula $θ(\tu x, \tu y, P)$. For all other atoms, the winner is immediately determined as before.

Now for any $\mathcal{A}$ and $φ \in \FO(\gpr)$ it holds that
\[\mathcal{A} \vDash φ \Longleftrightarrow \textnormal{the verifier has a winning strategy for the game on } \mathcal{A} \vDash φ.\]
Analogously, we can extend the semantic game for $\FO(\gpr_\bounded)$ and $\FO(\gpr_\semi)$.

Similar to the approach in \cite{HaakV16}, we can also count the number of winning strategies of the verifier. 

\begin{definition}
A function $f \colon \{0,1\}^* → ℕ$ is in $\WinFO(\gpr)$, if there is an $\FO(\gpr)$-formula φ over vocabulary $\tString \cup (\text{BIT}^2)$ \stfa $w \in \{0,1\}^*$:
\[f(w) = \text{CWin}(φ, \mathcal{A}_w),\]
where $\text{CWin}(φ, \mathcal{A}_w)$ is the number of winning strategies of the verifier in the model checking game for $\mathcal{A}_w \vDash φ$.
\end{definition}
$\WinFO(\gpr_\bounded)$ and $\WinFO(\gpr_\semi)$ are defined analogously.

This then gives us characterizations of the counting version of the corresponding classes from circuit complexity:

\begin{theorem}\label{thm:mainCounting}
\leavevmode
\begin{enumerate}
  \item $\#\NC = \WinFO(\gpr_\bounded)$
  \item $\#\SAC = \WinFO(\gpr_\semi)$
  \item$\#\ACone = \WinFO(\gpr)$
\end{enumerate}
\end{theorem}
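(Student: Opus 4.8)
The strategy is to lift the decision-level equivalences from Theorem~\ref{thm:mainDecision} to the counting level by showing that the two counting mechanisms---counting proof trees in the circuit and counting verifier winning strategies in the semantic game---agree under the translations already constructed in that proof. Concretely, I would prove each of the three equalities by the two inclusions and reuse the circuit/formula constructions from the proof of Theorem~\ref{thm:mainDecision} essentially verbatim, tracking additionally that they are \emph{count-preserving}. The key conceptual point is that the semantic game for $\FO(\gpr)$ defined just above is nothing but the model-checking game played on the circuit unfolded into a tree: an $\exists$-move (verifier) corresponds to an $\lor$-gate, a $\forall$-move (falsifier) to an $\land$-gate, an atom to an input/literal gate, and the new rule for $P(\tu a,\tu b)$---continue on $\theta$---corresponds exactly to following an edge into the subcircuit computing that gate's value. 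A verifier winning strategy in the game then corresponds bijectively to a proof tree of the circuit: at each verifier choice point (disjunction/$\lor$-gate) the strategy picks one successor, matching the ``exactly one child of an $\lor$-gate'' requirement, while at falsifier choice points (conjunction/$\land$-gate) the strategy must respond to all moves, matching the ``all children of an $\land$-gate'' requirement.

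\textbf{Direction $\#\ACone \subseteq \WinFO(\gpr)$ (and its two siblings).} Take $f \in \#\ACone$ witnessed by an $\ACone$ family $\mathcal{C}$ in the normal form of Lemma~\ref{lem:circuitNormalform}. Apply the construction from $\ACone \subseteq \FO(\gpr)$ to the underlying Boolean circuit to obtain the formula $\Phi \defeq [P(\tu y)\equiv\theta(\tu y,P)]\;\exists\tu o(\varphi_\textnormal{output}(\tu o)\wedge P(\tu o))$. Here $\theta$ is built so that the parity of $\tu y$ selects an existential guarded quantifier (for $\lor$-gates) or a universal one (for $\land$-gates), which is precisely the game-theoretic shape needed. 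I would then establish a bijection between proof trees of $C_{|w|}$ on $w$ and verifier winning strategies for $\mathcal{A}_w\vDash\Phi$, by induction on gate depth mirroring the correctness induction already carried out for the decision proof: the disjunctive/conjunctive structure of $\theta$ matches the gate type, and the guarded recursion step $P(\tu y)\leadsto\theta$ matches descending one level in the circuit. The one subtlety to verify is that the auxiliary subformula $\psi(\tu z,P(\tu z)) = P(\tu z)\wedge\neg\varphi_\textnormal{Literal}(\tu z)\vee\varphi_\textnormal{trueLiteral}(\tu z)$ does not spuriously create or destroy winning strategies: on a non-literal child the $\varphi_\textnormal{trueLiteral}$ disjunct is false and contributes a unique (forced) play, and on a literal child the $P(\tu z)\wedge\neg\varphi_\textnormal{Literal}$ disjunct is false; so in each case exactly one disjunct is ``live'' and the branching is not inflated. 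This must be checked carefully because an $\lor$ in the formula that is always won on one side only still counts as a single strategy, matching the single literal/gate in the proof tree. The bounded and semi-bounded variants go through identically, using the corresponding clauses of Theorem~\ref{thm:mainDecision}.

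\textbf{Direction $\WinFO(\gpr) \subseteq \#\ACone$ (and its siblings).} Conversely, given $f\in\WinFO(\gpr)$ via an $\FO(\gpr)$-formula $\varphi$, apply the construction from $\FO(\gpr)\subseteq\ACone$ to build the $\ACone$ family. I would again read off that this translation is count-preserving: each $\exists/\lor$ becomes an $\lor$-gate with one child per value, each $\forall/\land$ an $\land$-gate, each guarded recursive call a wire into the subcircuit for $P(\tu x,\tu z)$, and atoms become literal gates. The negation handling from the decision proof (duplicating gates with a negation-bit and swapping $\land\leftrightarrow\lor$) must be revisited with counting in mind, since the game rule for $\neg$ swaps the two players: when a negation-bit is toggled, verifier choice points become falsifier choice points and vice versa, which is exactly mirrored by the $\land\leftrightarrow\lor$ swap in the circuit, so proof-tree count and winning-strategy count remain equal. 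For the $\gpr_\bounded$ and $\gpr_\semi$ cases one additionally checks that restricting to the maximal two relativization-satisfying elements (via $\exists_\textnormal{b},\forall_\textnormal{b}$) corresponds to the bounded/semi-bounded fan-in of the $\NC$/$\SAC$ circuit, again without altering counts.

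\textbf{Main obstacle.} The routine part is reusing the two decision translations; the genuine work---and the step I expect to be hardest---is proving that each translation is \emph{count-preserving} rather than merely truth-preserving. The delicate points are (i) ensuring the auxiliary gadgets $\psi$, $\varphi_\textnormal{Literal}$, $\varphi_\textnormal{trueLiteral}$ and the parity-driven $Q$ in $\theta$ introduce no parasitic branchings that would over- or under-count strategies relative to proof trees, and (ii) verifying that the role-swap induced by negation in the game semantics is exactly compensated by the gate-type swap in the negation-bit construction, so that the bijection survives the handling of negations. Establishing the bijection cleanly likely requires formulating it inductively over the unfolded circuit tree / game tree in lockstep, which I would carry out as the technical core of the proof.
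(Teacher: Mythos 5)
Your proposal is correct and follows essentially the same route as the paper: reuse the formula and circuit constructions from Theorem~\ref{thm:mainDecision} and verify by induction on gate depth that they are count-preserving, with the same attention to the points the paper also handles explicitly (the auxiliary subformula $\psi$ contributing exactly one live disjunct per case, the parity-driven quantifier matching the gate type, and the guarded recursion step matching descent in the circuit). The paper's appendix proof carries out exactly this induction on $\#\win(P(\tu g))$ versus the number of proof trees of the subcircuit rooted at $\tu g$, and dispatches the converse direction and the bounded/semi-bounded variants by the same analogy you describe.
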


\begin{proof}
The proof consists of carefully counting winning strategies in semantic games for those formulae developed in the decision version (Theorem~\ref{thm:mainDecision}) and
is given in the appendix.
\end{proof}

Analogously to the decision version, the proof again allows us to establish a normal-form for our new logical classes.
\begin{corollary}
Let $\mathcal{G} \in \{\gpr, \gpr_\bounded, \gpr_\semi\}$. Then
\[\WinFO(\mathcal{G}) = \#\classFont{Win-}\mathcal{G}\classFont{-FO},\]
where $\#\classFont{Win-}\mathcal{G}\classFont{-FO}$ denotes the class of functions that can be described as the number of winning strategies for first-order formulae with one \gpr-operator in the beginning.
\end{corollary}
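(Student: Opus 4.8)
The inclusion $\#\classFont{Win-}\mathcal{G}\classFont{-FO} \subseteq \WinFO(\mathcal{G})$ is immediate: a formula carrying a single \gpr-operator at the beginning is in particular an $\FO(\mathcal{G})$-formula, and both classes count verifier winning strategies in the very same model-checking game. The content is therefore the converse, and the plan is to mirror exactly the argument behind the decision-version normal-form corollary: route an arbitrary $\FO(\mathcal{G})$-formula through the corresponding circuit class and back, observing that each leg of the round trip preserves not merely the truth value but the \emph{number} of winning strategies. Throughout, let $\mathfrak{C} \in \{\ACone, \NC, \SAC\}$ denote the circuit class matching $\mathcal{G} \in \{\gpr, \gpr_\bounded, \gpr_\semi\}$.

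\textbf{Routing through circuits.} By Theorem~\ref{thm:mainCounting} we have $\WinFO(\mathcal{G}) = \#\mathfrak{C}$, and crucially its proof is count-preserving in both directions, matching the number of verifier winning strategies with the number of proof trees. So, given an arbitrary $\FO(\mathcal{G})$-formula $\varphi$, the direction $\WinFO(\mathcal{G}) \subseteq \#\mathfrak{C}$ supplies a $\mathfrak{C}$-circuit family $\mathcal{C} = (C_n)_{n \in ℕ}$ with
\[\textnormal{CWin}(\varphi, \mathcal{A}_w) = \textnormal{number of proof trees of } C_{|w|}(w) \quad \textnormal{for all } w \in \{0,1\}^*.\]
This direction is the counting refinement of $\FO(\gpr) \subseteq \ACone$ from Theorem~\ref{thm:mainDecision} and already handles arbitrary $\varphi$, including nested and multiply-occurring \gpr-operators, the flattening being performed implicitly by the passage to a single circuit family.

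\textbf{Routing back in normal form.} Now apply the reverse inclusion $\#\mathfrak{C} \subseteq \WinFO(\mathcal{G})$ to $\mathcal{C}$. Just as in the decision case the $\ACone \subseteq \FO(\gpr)$ construction emits the single-operator formula
\[\Phi = \bigl[P(\tu y) \equiv \theta(\tu y, P)\bigr] \; \exists \tu o\,(\varphi_{\textnormal{output}}(\tu o) \wedge P(\tu o)),\]
the counting refinement underlying Theorem~\ref{thm:mainCounting} produces a formula $\varphi'$ of exactly this shape, with a single \gpr-operator at the front, whose verifier winning strategies are in bijection with the proof trees of $\mathcal{C}$. Composing the two bijections gives $\textnormal{CWin}(\varphi', \mathcal{A}_w) = \textnormal{CWin}(\varphi, \mathcal{A}_w)$ for all $w$, so $\varphi'$ witnesses $\textnormal{CWin}(\varphi, \cdot) \in \#\classFont{Win-}\mathcal{G}\classFont{-FO}$, establishing $\WinFO(\mathcal{G}) \subseteq \#\classFont{Win-}\mathcal{G}\classFont{-FO}$.

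\textbf{Main obstacle.} That the reverse leg outputs a single-operator formula is read off directly from the shape of $\Phi$, so the sole point requiring care beyond the decision-version corollary is that \emph{counts}, not just languages, survive both translations; in particular, the implicit flattening of nested \gpr-operators into one during the circuit round trip must neither create nor destroy winning strategies. This is precisely the content of the proof-tree/winning-strategy correspondence established in Theorem~\ref{thm:mainCounting}, which we may assume. Granting that theorem, I expect no genuinely new obstacle: the corollary follows from Theorem~\ref{thm:mainCounting} in exactly the way its decision-version analogue follows from Theorem~\ref{thm:mainDecision}.
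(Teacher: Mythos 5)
Your proposal is correct and takes essentially the same route as the paper: the paper likewise obtains the corollary by round-tripping an arbitrary $\FO(\mathcal{G})$-formula through the matching arithmetic circuit class via Theorem~\ref{thm:mainCounting}, observing that the count-preserving formula $\Phi$ built in the $\#\ACone \subseteq \WinFO(\gpr)$ direction (and its bounded and semi-unbounded variants) carries a single \gpr-operator in front, while the reverse inclusion is immediate. Nothing further is needed.
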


\begin{remark}
To further show the robustness of our classes, we want to mention certain variations of our logics that do not change the resulting complexity classes. For all decision classes, we can drop condition \ref{cond2} from Definition \ref{def:gpr} without changing the class. For $\WinFO(\gpr)$ the same holds.

For $\WinFO(\gpr_\bounded)$ and $\WinFO(\gpr_\semi)$, condition 2 cannot be dropped but can be replaced by the following weaker version: 
  ``never occur in the scope of any universal quantification not guarded in this way''.
\end{remark}

\section{Conclusion}\label{sect:concl}

We extended the only so-far known logical characterization of an arithmetic circuit class, namely $\cAC = \WinFO$ \cite{HaakV16}, to arithmetic classes defined by circuits of logarithmic depth.
In order to achieve this, we first had to develop logical characterizations of the corresponding Boolean classes.

The result from \cite{HaakV16} was used in \cite{durand_et_al:LIPIcs:2016:6560} to place $\cAC$ in a strict hierarchy of counting classes within $\cP$. In this way, lower bounds for several logically-defined arithmetic classes were obtained. 
Our hope is that the here presented characterizations of larger arithmetic classes will also lead to new insights about these and hopefully spur development of new upper and lower bounds, \eg, is $\#\NC\subseteq\NC$? Is $\#\NC\neq\cP$? Is $\NC\neq\PP$?

\bibliographystyle{plain}
\bibliography{references}

\begin{thebibliography}{10}

\bibitem{Ajtai83}
Mikl{\'{o}}s Ajtai.
\newblock {\(\Sigma^1_1\)}-formulae on finite structures.
\newblock {\em Ann. Pure Appl. Logic}, 24(1):1--48, 1983.

\bibitem{baim94}
David A.~Mix Barrington and Neil Immerman.
\newblock Time, hardware, and uniformity.
\newblock In {\em Proceedings 9th Structure in Complexity Theory}, pages
  176--185. IEEE Computer Society Press, 1994.

\bibitem{baimst90}
David A.~Mix Barrington, Neil Immerman, and H.~Straubing.
\newblock On uniformity within {NC$^1$}.
\newblock {\em Journal of Computer and System Sciences}, 41:274--306, 1990.

\bibitem{ChandraTompa:1988}
Ashok~K. Chandra and Martin Tompa.
\newblock {The complexity of short two-person games}.
\newblock {\em Discrete Applied Mathematics}, pages 21--33, January 1990.

\bibitem{ComptonL90}
Kevin~J. Compton and Claude Laflamme.
\newblock An algebra and a logic for {NC}{\({^1}\)}.
\newblock {\em Inf. Comput.}, 87(1/2):240--262, 1990.

\bibitem{durand_et_al:LIPIcs:2016:6560}
Arnaud Durand, Anselm Haak, Juha Kontinen, and Heribert Vollmer.
\newblock Descriptive complexity of {\#}{AC}\({}^{\mbox{0}}\) functions.
\newblock In Jean-Marc Talbot and Laurent Regnier, editors, {\em 25th EACSL
  Annual Conference on Computer Science Logic (CSL 2016)}, volume~62 of {\em
  Leibniz International Proceedings in Informatics (LIPIcs)}, pages
  20:1--20:16, Dagstuhl, Germany, 2016. Schloss Dagstuhl--Leibniz-Zentrum fuer
  Informatik.

\bibitem{EbbFlumThomBuch}
Heinz-Dieter Ebbinghaus, J{\"o}rg Flum, and Wolfgang Thomas.
\newblock {\em Mathematical logic}.
\newblock Undergraduate texts in mathematics. Springer-Verlag, 1994.

\bibitem{FurstSS84}
Merrick~L. Furst, James~B. Saxe, and Michael Sipser.
\newblock Parity, circuits, and the polynomial-time hierarchy.
\newblock {\em Mathematical Systems Theory}, 17(1):13--27, 1984.

\bibitem{HaakV16}
Anselm Haak and Heribert Vollmer.
\newblock A model-theoretic characterization of constant-depth arithmetic
  circuits.
\newblock In Jouko~A. V{\"{a}}{\"{a}}n{\"{a}}nen, {\AA}sa Hirvonen, and Ruy J.
  G.~B. de~Queiroz, editors, {\em Logic, Language, Information, and Computation
  - 23rd International Workshop, WoLLIC 2016, Puebla, Mexico, August 16-19th,
  2016. Proceedings}, volume 9803 of {\em Lecture Notes in Computer Science},
  pages 234--248. Springer, 2016.
\newblock See also: CoRR, abs/1603.09531.

\bibitem{Immerman89}
Neil Immerman.
\newblock Expressibility and parallel complexity.
\newblock {\em {SIAM} J. Comput.}, 18(3):625--638, 1989.

\bibitem{ImmermanBuch}
Neil Immerman.
\newblock {\em Descriptive complexity}.
\newblock Graduate texts in computer science. Springer, 1999.

\bibitem{KaSa2014}
Neeraj Kayal and Ramprasad Saptharishi.
\newblock A selection of lower bounds for arithmetic circuits.
\newblock In Manindra Agrawal and Vikraman Arvind, editors, {\em Perspectives
  in Computational Complexity: The Somenath Biswas Anniversary Volume}, pages
  77--116. Birkh{\"a}user, 2014.

\bibitem{LautemannMSV01}
Clemens Lautemann, Pierre McKenzie, Thomas Schwentick, and Heribert Vollmer.
\newblock The descriptive complexity approach to {LOGCFL}.
\newblock {\em J. Comput. Syst. Sci.}, 62(4):629--652, 2001.

\bibitem{Libkin04}
Leonid Libkin.
\newblock {\em Elements of Finite Model Theory}.
\newblock Texts in Theoretical Computer Science. An {EATCS} Series. Springer,
  2004.

\bibitem{Mah2014}
Meena Mahajan.
\newblock Algebraic complexity classes.
\newblock In Manindra Agrawal and Vikraman Arvind, editors, {\em Perspectives
  in Computational Complexity: The Somenath Biswas Anniversary Volume}, pages
  51--75. Birkh{\"a}user, 2014.

\bibitem{Malod:2008ub}
Guillaume Malod and Natacha Portier.
\newblock {Characterizing Valiant's algebraic complexity classes}.
\newblock {\em Journal of Complexity}, 24(1), February 2008.

\bibitem{Sap2017}
Ramprasad Saptharishi.
\newblock A survey of known lower bounds in arithmetic circuits. {A} continu-
  ously updated git survey.
\newblock https://github.com/dasarpmar/lowerbounds-survey.

\bibitem{HeribertBuch}
Heribert Vollmer.
\newblock {\em Introduction to Circuit Complexity - {A} Uniform Approach}.
\newblock Texts in Theoretical Computer Science. An {EATCS} Series. Springer,
  1999.

\end{thebibliography}

\newpage
\appendix

\section{Appendix}

\begin{proof}[Proof of Lemma \ref{lem:circuitNormalform}]
Since $L \in \mathfrak{C}$, there is an \FO-interpretation $I: \struc[\tString] \to \struc[\tCirc]$ that uniformly describes a circuit family showing $L \in \mathfrak{C}$. Let $k$ be the length of tuples encoding gates in this circuit family. We now stepwise construct an \FO-interpretation still showing $L \in \mathfrak{C}$ but with properties 1, 2 and 3.

Since we will have to adapt encodings of gates as tuples in order to manipulate certain properties related to the numerical predicates, it is relevant for this proof in what way we represent numbers as tuples. We will always use an \textit{most significant bit first} encoding, meaning that the variable containing the most significant bit of number is the left-most variable in a tuple and significance reduces towards the left.

\underline{1.}: We first construct an \FO-interpretation $I'$ that has property 1. This is done by allowing all tuples as gates, but allowing connections between gates only if both gates were already gates in the original circuit. Analogously, we also have to change the formula determining the output gate---otherwise, multiple tuples could become output gates. The only formulae we have to change for this are those for the universe and the predicates $E$ and $\textnormal{output}$. Let $φ_\textnormal{universe}, φ_E, φ_{\textnormal{output}}$ be the respective formulae from $I$. For $I'$ we instead use
\begin{align*}
φ_\textnormal{universe}'(\tu g) &= \top\\
φ_E'(\tu g_1, \tu g_2) &= φ_E(\tu g_1, \tu g_2) \land φ_\textnormal{universe}(\tu g_1) \land φ_\textnormal{universe}(\tu g_2)\\
φ_\textnormal{output}'(\tu g) &= φ_\textnormal{output}(\tu g) \land φ_\textnormal{universe}(\tu g)
\end{align*}

Now, in order to additionally achieve properties 2 and 3 from the statement of the lemma, we proceed as follows: For property 2, we add an additional bit as LSB to the encoding of gates and use only those versions of ∧-gates where this bit is 1 and those versions of ∨-gates where this bit is 0. For property 3, we add to the encoding of gates a unary encoding of the depth in a certain way. We then only connect two gates if their corresponding gates from the original circuit were connected and the depth increases by 1 from the parent to the child. (This means a lot of these new versions of the gates are not used.) 

For both these approaches we want to add bits to the encoding. Since we can only add additional variables to the tuples encoding gates and each variable increases the number of bits by $\log n$, we pad the number of bits to a multiple of $\log n$ and simply add the according number of additional variables to the tuples.

We now formalize the above ideas.

\underline{2.}: We construct an \FO-interpretation $I''$ that has property 2 in addition to the previous properties. For this, we increase the tuple size by 1. Then the following formulae can be used for $I''$:
\begin{align*}
φ_\textnormal{universe}''(\tu gx) &= \top\\
φ_{G_\land}''(\tu gx) &= \textnormal{BIT}(x,0)\\
φ_{G_\lor}''(\tu gx) &= \neg \textnormal{BIT}(x,0)\\
φ_\textnormal{Input}''(\tu gx, i_1 \dots i_k y) &= i_1 = 0 \land φ_\textnormal{Input}'(\tu g, i_2 \dots i_k y)\\
φ_\textnormal{negatedInput}''(\tu gx, i_1 \dots i_k y) &= i_1 = 0 \land φ_\textnormal{negatedInput}'(\tu g, i_2 \dots i_k y)\\
φ_E''(\tu g_1x_1, \tu g_2x_2) &= φ_E'(\tu g_1, \tu g_2) \land \bigwedge_{i=1}^2 ψ_\textnormal{real}(\tu g_ix_i)\\
φ_\textnormal{output}''(\tu gx) &= φ_\textnormal{output}'(\tu g) \land ψ_\textnormal{real}(\tu gx)
\end{align*}
with
\[ψ_\textnormal{real}(\tu gx) = (φ_{G_\land}'(\tu g) \land x=1) \lor (φ_{G_\lor}'(\tu g) \land x=0) \lor \exists \tu i (φ_\textnormal{Input}'(\tu g, \tu i) \lor φ_\textnormal{negatedInput}'(\tu g, \tu i)).\]

\underline{3.}: We construct an \FO-interpretation $I'''$ that additionally has property 3. For this, let $\ell \cdot \log n$ be a bound for the depth of the circuit family described by $I''$. For $I'''$, we increase the tuple size by $2\ell$. The idea is to create for each gate of the old circuit a duplicate for each possible depth within the circuit. This will make the circuit layered. The depth is encoded in the following way: Depth 0 is encoded by the sequence only consisting of 1s. Depth $i+1$ is then encoded by the same sequence as $i$ only the first two 1s are made 0s. This kind of unary encoding is possible since the circuits have logarithmic depth. In the final circuit, gates are only connected if their versions in the old circuit were connected and the child's depth is 1 higher than the parent's. This leads to the following formulae for $I'''$:

\begin{align*}
φ_\textnormal{universe}'''(\tu h \tu g x) &= \top\\
φ_{G_\land}'''(\tu h \tu g x) &= φ_{G_\land}''(\tu g x)\\
φ_{G_\lor}'''(\tu h \tu g x) &= φ_{G_\lor}''(\tu g x)\\
φ_\textnormal{Input}'''(\tu h \tu g x, \tu j \tu i y) &= \tu j = \tu 0 \land φ_\textnormal{Input}''(\tu g x, \tu i y)\\
φ_\textnormal{negatedInput}'''(\tu h \tu g x, \tu j \tu i y) &= \tu j = \tu 0 \land φ_\textnormal{negatedInput}''(\tu g x, \tu i y)\\
φ_E'''(\tu h_1 \tu g_1 x_1, \tu h_2 \tu g_2 x_2) &= φ_E''(\tu g_1 x_1, \tu g_2 x_2) \land \exists \tu i \bigg( \big(\forall \tu j \leq \tu i \textnormal{BIT}(\tu h_1, \tu j)\big) \land \big(\forall \tu j > \tu i \neg \textnormal{BIT}(\tu h_1, \tu j)\big)\\
& \quad \quad \quad \quad \big(\forall \tu j \leq (\tu i-2) \textnormal{BIT}(\tu h_2, \tu j)\big) \land \big(\forall \tu j > (\tu i -2) \neg \textnormal{BIT}(\tu h_2, \tu j)\big) \bigg)
\end{align*}

The circuit family described by $I'''$ has properties 1, 2 and 3.
\end{proof}

\begin{proof}[Proof of Theorem \ref{thm:mainCounting}]
The proof idea is very similar to the one used for the decision version. We again begin by proving the unbounded version.

\underline{$\#\ACone \subseteq \WinFO(\gpr)$}:
Let $f \in \#\ACone$ via the \FO-uniform \ACone circuit family $\mathcal{C} = (C_n)_{n \in ℕ}$ with the properties from Lemma \ref{lem:circuitNormalform} and $C_n$ has depth at least 1 for all $n$. The latter can easily be achieved by adding a new ∧-gate as output-gate with the old output-gate being its only child. Let
\[I = (φ_\textnormal{universe}, φ_{G_\land}, φ_{G_\lor}, φ_\textnormal{Input}, φ_\textnormal{negatedInput}, φ_E, φ_\textnormal{output})\]
be an \FO-interpretation showing that $\mathcal{C}$ is uniform. Furthermore, let θ and its subformulae ψ, $φ_\text{Literal}$ and $φ_\text{trueLiteral}$ be defined as in the proof of Theorem \ref{thm:mainDecision}. Then
\[Φ \defeq [P(\tu y) \equiv θ(\tu y, P)] \; ∃\tu o(φ_\text{output}(\tu o) ∧ P(\tu o))\]
defines $f$. Note that $φ_E(\tu y, \tu z)$ within the relativization for $\tu z$ can be moved outside the relativization without changing the number of winning strategies, leading to an $\FO(\gpr)$-formula with the same number of winning strategies.

In the following we use $\#\win(φ, \mathcal{A})$ as notation for the number of winning strategies of the verifier for $\mathcal{A} \vDash φ$. $\text{Odd}(\tu y)$ can be constructed \stfa $\mathcal{A}$ we have $\#\win(\text{Odd}(\tu y), \mathcal{A}) \in \{0,1\}$ and $\text{Even}(\tu y)$ can be constructed \stfa $\mathcal{A}$ we have $\#\win(\text{Even}(\tu y), \mathcal{A}) = 1 - \#\win(\text{Odd}(\tu y), \mathcal{A})$. This means that for all $\mathcal{A}$ it holds that
\[\#\win(θ(\tu y, P), \mathcal{A}) = \#\win\Big(\big(Q\tu z.(\tu z < \tu y/2 ∧ φ_E(\tu y, \tu z))\big) P(\tu z) ∧ ¬φ_\text{Literal}(\tu z) ∨ φ_\text{trueLiteral}(\tu z), \mathcal{A}\Big)\]
where $Q$ is either ∃ or ∀ depending on the parity of $\tu y$.

Let $n \in ℕ$ and $w \in \{0,1\}^n$. We now prove that the number of winning strategies verifying that $P(\tu g)$ is true is exactly the number of proof trees of the subcircuit of $C_n$ rooted in $\tu g$. By definition, on input structure $\mathcal{A}_w$, the formulae from $I$ used above give access to $C_n$. We prove inductively that for any $k \in ℕ$, $\#\win(P(\tu g))$ gives the number of proof trees of the subcircuit of $C_n$ rooted in $\tu g$ on input $w$ if all children of $\tu g$ have depth $\leq k$.

$k = 0$: Note that $φ_\textnormal{trueLiteral}(\tu h)$ gives the value of $\tu h$ in $C_n$ on input $w$ if $\tu h$ is an input gate and that for these gates the value of the gate is equal to the number of proof trees of the subcircuit rooted in them. This means that for gates $\tu g$ all children of which are input gates we have:
\begin{align*}\label{eq:IAcount}
\#\win(P(\tu g)) &= \#\win(Q\tu z.(\tu z < \tu g/2 ∧ φ_E(\tu g, \tu z)) \big( P(\tu z) ∧ ¬φ_\textnormal{Literal}(\tu z) ∨ φ_\textnormal{trueLiteral}(\tu z) \big)\\
&= \ \ \quad \mathclap{\mathop{\bigcirc}_{\substack{\tu z \in |\mathcal{A}_w|,\\ \tu z<\tu g/2 ∧ φ_E(\tu g, \tu z)}}} \quad \#\win( P(\tu z) ∧ ¬φ_\textnormal{Literal}(\tu z) ∨ φ_\textnormal{trueLiteral}(\tu z) ), \tag{$\star \star$}
\end{align*}
where $\bigcirc$ is either summation or multiplication depending on the parity of $\tu g$.

We can assume that there is exactly one winning strategy showing $φ_\textnormal{Literal}$, respectively $φ_\textnormal{trueLiteral}$, if it is true (and none otherwise). Since $k=0$, we know that for all $\tu z$ that meet the conditions, $φ_\textnormal{Literal}(\tu z)$ is true yielding
\[\#\win(P(\tu g)) = \ \ \quad \mathclap{\mathop{\bigcirc}_{\substack{\tu z \in \mathcal{A}_w,\\\tu z<\tu g/2 ∧ φ_E(\tu g, \tu z)}}} \quad \#\win(φ_\textnormal{trueLiteral}(\tu z)).\]
By assumption, if $φ_E(\tu g, \tu z)$ then $\tu z < \tu g/2$, and thus $\tu z < \tu g/2 ∧ φ_E(\tu g, \tu z) \equiv φ_E(\tu g, \tu z)$.

This means that $\#\win(P(\tu g))$ is exactly the number of proof trees of the subcircuit of $C_n$ rooted in $\tu g$.

$k → k+1$: Again, by assumption, $\tu z<\tu g/2 ∧ φ_E(\tu g, \tu z) \equiv φ_E(\tu g, \tu z)$. We also know that for all children $\tu z$ of $\tu g$ only two cases can occur:

If $\tu z$ is an input gate, then $\#\win(¬φ_\textnormal{Literal}(\tu z)) = 0$ and $\#\win(φ_\textnormal{trueLiteral}(\tu z))$ is exactly the number of proof trees of the subcircuit of $C_n$ rooted in $\tu z$.

If $\tu z$ is not an input gate, then by assumption $\#\win(φ_\textnormal{trueLiteral}(\tu z)) = 0$, $\#\win(¬φ_\textnormal{Literal}) = 1$ and by induction hypothesis $\#\win(P(\tu z))$ is exactly the number of proof trees of the subcircuit of $C_n$ rooted in $\tu z$.

By (\ref{eq:IAcount}) this means that $\#\win(P(\tu g))$ is equal to the number of proof trees of the subcircuit of $C_n$ rooted in $\tu g$.

Since $\#\win(P(\tu g))$ gives the number of proof trees of the subcircuit of $C_{|w|}$ rooted in $\tu g$ for arbitrary non-input gates $\tu g$ in $C_n$ on input $w$ for any $w$, it is easy to see that the above formula defines $f$: The number of winning strategies of the formula behind the recursive definition of $P$ is almost immediately the number of winning strategies for $P(\underline{\textnormal{output}})$, where \underline{output} is the unique element satisfying $φ_\text{output}$. By the induction above this is equal to the number of proof trees of circuit $C_n$.

\underline{$\WinFO(\gpr) \subseteq \#\ACone$}: This can be proven completely analogously to the decision version. Counting proof trees of the constructed circuit family leads exactly to the function given by the number of winning strategies of the formula we started with.

Both $\#\NC \subseteq \WinFO(\gpr_\bounded)$ and $\#\SAC \subseteq \WinFO(\gpr_\semi)$ can---as for the decision version---be shown with the same formula as $\#\ACone \subseteq \WinFO(\gpr)$ by changing the \gpr-operator to a $\gpr_\bounded$- or $\gpr_\semi$-operator, respectively.

The converse directions $\WinFO(\gpr_\bounded) \subseteq \#\NC$ and $\WinFO(\gpr_\semi) \subseteq \#\SAC$ can also be shown analogoulsy, using again the restriction that $P$ occurs only within bounded quantifiers within θ.
%
\end{proof}

\end{document}